\documentclass[letterpaper, 10 pt, conference]{ieeeconf}
\IEEEoverridecommandlockouts
\usepackage[T1]{fontenc}
\usepackage[utf8]{inputenc}
\usepackage{bm}
\usepackage[cmex10]{amsmath}
\usepackage{amssymb}
\usepackage{amsfonts}
\usepackage{mathtools}
\usepackage{graphicx}
\usepackage{booktabs}
\usepackage{cite}
\usepackage{url}
\usepackage{algorithm}
\usepackage{algorithmic}
\usepackage{placeins}
\usepackage{flushend}
\usepackage{xcolor}
\graphicspath{{./figures/}}

\newcommand{\rev}[1]{#1}
\newenvironment{revision}{}{}

\usepackage{amsthm}
\newtheorem{theorem}{Theorem}
\newtheorem{lemma}{Lemma}
\newtheorem{assumption}{Assumption}
\newtheorem{proposition}{Proposition}
\newtheorem{remark}{Remark}

\newcommand{\RR}{\mathbb{R}}
\newcommand{\EE}{\mathbb{E}}
\newcommand{\one}{\bm{1}}
\newcommand{\bsx}{\bm{x}}

\newcommand{\bsL}{\bm{L}}
\newcommand{\bsK}{\bm{K}}
\newcommand{\bsH}{\bm{H}}
\newcommand{\col}{\operatorname{col}}
\newcommand{\sgn}{\operatorname{sgn}}
\newcommand{\norm}[1]{\left\|#1\right\|}
\newcommand{\pbmap}[1]{\mathcal{P}_{\gamma,\tau}\!\left(#1\right)}
\newcommand{\amap}[1]{\mathcal{A}_{\gamma,\tau,\beta_k}\!\left(#1\right)}

\begin{document}

\title{\LARGE \bf
Nonlinear-Gain Distributed Zeroth-Order Optimization\\
for Networked Black-Box Control\\
\large Extended Version}

\author{Shengjun~Zhang, Tingyi~Liu, Heng~Zhang, and Dong~Xie%
\thanks{This work was partially supported by the National Natural Science Foundation of China under grant 52307120 and the National Social Science Fund of China under grant 24CTJ010.}
\thanks{Shengjun~Zhang is with the School of Artificial Intelligence, Hubei University, Wuhan, China. \tt\small{sj.zhang@hubu.edu.cn}.}
\thanks{Tingyi~Liu is with the School of Economics and Management, Wuhan University, Wuhan, China. \tt\small{tingyiL@whu.edu.cn}.}
\thanks{Heng~Zhang is with the School of Electrical Engineering, Shanghai Jiao Tong University, Shanghai, China. \tt\small{zhangheng\_sjtu@sjtu.edu.cn}.}
\thanks{Dong~Xie is with Baidu, Beijing, China. {\tt\small xiedong04@baidu.com}.}}

\maketitle
\thispagestyle{empty}
\pagestyle{empty}

\begin{abstract}
\begin{revision}
This letter studies distributed stochastic optimization over a peer-to-peer network when agents can query only zeroth-order function values. We propose ZOOM-PB, a coordinate-sampling method that blends each local ZO estimate with a fractional-power response while maintaining only a primal state. The raw estimate is retained as a linear anchor, and the nonlinear mixing weight is coupled to the optimization stepsize. This design is motivated by a basic obstruction: transforming heterogeneous or noisy local estimates before averaging can reverse the network direction. We bound that nonlinear residual directly from the raw oracle assumptions instead of imposing an aggregate-alignment condition. With a smooth stochastic-function oracle and a connected graph, ZOOM-PB attains the nonconvex stationarity order $\mathcal{O}(\sqrt{p/(nT)})$ and a Polyak--{\L}ojasiewicz statistical term of order $\mathcal{O}(p/(nT))$, after an explicit initialization transient. Numerical examples compare ZOOM-PB with seven distributed ZO baselines under matched query and message budgets.
\end{revision}
\end{abstract}

\begin{keywords}
Distributed optimization, zeroth-order optimization, black-box control, multi-agent systems, powerball.
\end{keywords}

\section{Introduction}

Distributed optimization is a basic tool for networked control, machine learning, robotic swarms, and sensor networks, where agents cooperatively solve
\begin{equation}\label{eq:prob}
    \min_{x\in\RR^p} f(x)\triangleq \frac{1}{n}\sum_{i=1}^{n} f_i(x),
    \qquad f_i(x)=\EE_{\xi_i}[F_i(x,\xi_i)] .
\end{equation}
Many first-order distributed methods assume that each agent can compute or estimate $\nabla f_i(x)$ directly. This assumption is restrictive in black-box control and simulation-based optimization, where agents may observe only noisy function values, such as concentration readings in source seeking~\cite{li2020cooperative} or loss values in black-box learning~\cite{spall2005introduction,conn2009introduction,audet2017derivative,larson2019derivative}.

Zeroth-order (ZO) methods replace gradients by finite-difference estimates and have been extensively studied in centralized stochastic optimization~\cite{ghadimi2013stochastic,lian2016Comprehensive,nesterov2017random,liu2018zeroth}. Distributed ZO methods further couple gradient estimation with network consensus~\cite{yuan2014randomized,yuan2015gradient,tang2020distributedzero,hajinezhad2019zone,yi2021zerothorder,zhang2021convergence,tang2023zerothfeedback}. However, most existing distributed stochastic ZO algorithms follow one of two paths: they either use spherical perturbations with primal--dual or tracking variables, or they focus on asymptotic stationarity without addressing the transient slowdowns caused by small ZO signals in flat nonconvex regions. \rev{This issue arises naturally in black-box networked control, where useful finite-difference signals may be small relative to measurement or sampling noise.}

\begin{revision}
We introduce a scaled \emph{powerball} map into distributed ZO updates. For an estimated component $g$ and a reference scale $\tau>0$, define
\begin{equation}\label{eq:pb}
    \mathcal{P}_{\gamma,\tau}(g)
    =\tau^{1-\gamma}\sgn(g)|g|^\gamma,
    \qquad \gamma\in[1/2,1],
\end{equation}
applied componentwise. Its gain ratio is
\begin{equation}
    \frac{|\mathcal{P}_{\gamma,\tau}(g)|}{|g|}
    =\left(\frac{\tau}{|g|}\right)^{1-\gamma}.
\end{equation}
Hence components below $\tau$ are amplified and components above $\tau$ are attenuated. The exact scaling laws are
$\mathcal{P}_{\gamma,\tau}(cg)=c^\gamma\mathcal{P}_{\gamma,\tau}(g)$ and
$\mathcal{P}_{\gamma,c\tau}(cg)=c\mathcal{P}_{\gamma,\tau}(g)$ for $c>0$.
The map reshapes both signal and noise; it is not a noise discriminator.
Powerball maps were introduced for first-order descent~\cite{yuan2019powerball}; here the map is placed after a local coordinate-ZO oracle and before peer-to-peer averaging.

To retain a reliable linear component, ZOOM-PB uses the anchored map
\begin{equation}\label{eq:anchored_map}
 \mathcal A_{\gamma,\tau,\beta}(g)
 =(1-\beta)g+\beta\mathcal P_{\gamma,\tau}(g),
 \qquad \beta\in[0,1].
\end{equation}
The map is evaluated locally; $\beta=0$ gives the raw ZO direction and
$\beta=1$ gives the pure powerball direction.

The paper makes three contributions. First, ZOOM-PB combines sampled coordinate queries with a one-state recursion: each agent maintains and broadcasts only $x_{i,k}$. Local oracle calls run in parallel, but each agent still queries only its own objective. Second, we show that the nonlinear residual in~\eqref{eq:anchored_map} is a controlled perturbation under the same raw stochastic-oracle assumptions used for the linear direction. Choosing $\beta_k^2=\mathcal O(\eta_k/n)$ preserves the stated nonconvex and PL statistical orders without assuming that the averaged pure-powerball direction is aligned. The proof also bounds the nonlinear forcing in the consensus recursion. Third, matched-budget experiments separate gain shaping from estimator and state choices across coordinate, spherical, and Gaussian distributed-ZO baselines.
\end{revision}

\begin{proposition}\label{prop:gain}
\rev{For $\gamma\in(0,1)$, $\tau>0$, $\beta\in(0,1]$, and $g\ne0$,
$\mathcal A_{\gamma,\tau,\beta}$ preserves the sign of $g$, and
\[
 \frac{|\mathcal A_{\gamma,\tau,\beta}(g)|}{|g|}
 =(1-\beta)+\beta\left(\frac{\tau}{|g|}\right)^{1-\gamma}.
\]
It amplifies components below $\tau$, leaves $|g|=\tau$ unchanged, and
attenuates components above $\tau$. Moreover,
$\mathcal A_{1,\tau,\beta}(g)=g$.}
\end{proposition}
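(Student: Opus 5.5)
The proof is a direct computation from the definitions \eqref{eq:pb} and \eqref{eq:anchored_map}. The one idea is to factor out $\sgn(g)$ and $|g|$. Since $g\ne0$, we can write $g=\sgn(g)|g|$. Then
\[
 \mathcal A_{\gamma,\tau,\beta}(g)
 =\sgn(g)\Big[(1-\beta)|g|+\beta\tau^{1-\gamma}|g|^{\gamma}\Big]
 =\sgn(g)\,|g|\Big[(1-\beta)+\beta\Big(\tfrac{\tau}{|g|}\Big)^{1-\gamma}\Big].
\]
The bracketed factor is a convex combination of $1$ and $(\tau/|g|)^{1-\gamma}$. Both of these are strictly positive, because $\tau>0$ and $|g|>0$. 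The factor is therefore strictly positive. This gives sign preservation at once, and taking absolute values gives the stated ratio formula.

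For the amplification and attenuation claims, set $r=\tau/|g|>0$. The map $r\mapsto r^{1-\gamma}$ is strictly increasing because $1-\gamma>0$, and it satisfies $1^{1-\gamma}=1$. So $r^{1-\gamma}>1$ if and only if $|g|<\tau$, and $r^{1-\gamma}<1$ if and only if $|g|>\tau$.

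Write the ratio as $1+\beta\big(r^{1-\gamma}-1\big)$. Since $\beta>0$, it is $>1$, $=1$, or $<1$ exactly when $|g|<\tau$, $|g|=\tau$, or $|g|>\tau$. These three cases are amplification, the fixed point, and attenuation. The assumption $\beta>0$ is what makes the inequalities strict; for $\beta=0$ the ratio is identically $1$.

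For the final identity, substitute $\gamma=1$ into \eqref{eq:pb}. This gives $\mathcal P_{1,\tau}(g)=\tau^0\sgn(g)|g|=g$. Hence $\mathcal A_{1,\tau,\beta}(g)=(1-\beta)g+\beta g=g$ for every $\beta$, and the identity also holds at $g=0$ with the convention $\sgn(0)=0$.

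There is no real obstacle. The only point needing care is that $\gamma=1$ lies outside the range $\gamma\in(0,1)$ assumed in the first part. The last claim must therefore be checked separately from \eqref{eq:pb} rather than read off from the ratio formula, although the formula also gives ratio $1$ when $\gamma=1$.
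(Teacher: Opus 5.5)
Your proposal is correct and follows the paper's route: the paper's proof also notes that both terms of the anchored map carry the sign of $g$, and it divides their convex combination by $|g|$ to obtain the ratio. You additionally spell out the monotonicity argument for amplification and attenuation and the separate $\gamma=1$ check, which the paper leaves implicit.
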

\begin{proof}
\rev{Both terms in~\eqref{eq:anchored_map} have the sign of $g$. Dividing
their convex combination by $|g|$ gives the displayed ratio.}
\end{proof}

\rev{Proposition~\ref{prop:gain} describes each scalar input, not the direction obtained after averaging across agents or noise. Lemma~\ref{lem:key} therefore analyzes the difference $\mathcal P_{\gamma,\tau}(g^e)-g^e$ without assigning it a descent sign. The gain does not improve the claimed asymptotic exponent; its role is finite-time input shaping.}

\begin{revision}
\begin{table*}[t]
\caption{Representative ZO methods: local state, sent vector, and rates in the iteration/communication-round index of each cited result.}
\label{tab:position}
\centering
\scriptsize
\setlength{\tabcolsep}{2.2pt}
\begin{tabular}{@{}p{0.19\textwidth}p{0.17\textwidth}p{0.15\textwidth}p{0.13\textwidth}p{0.25\textwidth}@{}}
\toprule
Method & Oracle & Local & Sent & Guarantee or limitation \\
\midrule
Centralized ZO~\cite{ghadimi2013stochastic,lian2016Comprehensive} & Stochastic & server $x$ & -- & $\mathcal{O}(\sqrt{p/T})$ \\
ZO-GDA~\cite{tang2020distributedzero} & Spherical 2-point & $x$ & $x$ & Deterministic setting \\
ZONE-M~\cite{hajinezhad2019zone} & Gaussian multi-point & $z^r,z^{r-1},\bar G^{r-1}$ & $z^r$ & $\mathcal{O}(1/T)$, $J=\mathcal O(T)$ \\
ZOD-PDA/ZOD-PA~\cite{yi2021zerothorder} & Spherical 2-point & $(x,v)$/$x$ & $x$ & $\mathcal O(\sqrt{p/(nT)})$; PL $\mathcal O(p/(nT))$ \\
{\raggedright ZODIAC/accelerated powerball ZO~\cite{zhang2021convergence,zhang2022accelerated}\par} & Sampled coordinate & $(x,v)$ & $x$ & $\mathcal O(\sqrt{p/T})$/$\mathcal O(\sqrt{p/(nT)})$ \\
ZOOM-PB & Sampled coordinate & $x$ & $x$ & $\mathcal{O}(\sqrt{p/(nT)})$; PL $\mathcal{O}(p/(nT))$ \\
\bottomrule
\end{tabular}
\end{table*}
\end{revision}

\subsection{Relation to Existing Distributed ZO Methods}

\begin{revision}
Centralized stochastic ZO establishes the baseline dimension dependence but does not address peer-to-peer agreement. Distributed deterministic ZO treats exact local costs, whereas the stochastic-function model below represents data sampling or simulator randomness using the same sample for paired evaluations. Independent additive sensor noise is distinguished in Remark~\ref{rem:indnoise}.

The closest coordinate methods are ZODIAC and the ACC 2022 powerball method~\cite{zhang2022accelerated}. Both maintain $(x_{i,k},v_{i,k})$, but their published protocols broadcast only $x_{i,k}$. ZOOM-PB uses the same one-vector message while eliminating the local dual recursion and anchoring the nonlinear response. Thus its distinction is state and update structure, not a smaller link payload relative to those methods; powerball acceleration itself is not claimed as new. ZOD-PDA and ZOD-PA~\cite{yi2021zerothorder} attain the same leading stochastic nonconvex and PL orders using spherical probes. ZOOM-PB removes ZOD-PDA's local dual vector and its update. ZOD-PA is already primal-only, so the comparison there is not a state, message, or rate advantage: ZOOM-PB instead offers sparse sampled-coordinate probes and an anchored nonlinear interpolation whose finite-time effect is tested below. Multi-agent ZO feedback optimization provides a complementary control-loop perspective~\cite{tang2023zerothfeedback}.

For $\gamma<1$, raw unbiasedness and bounded heterogeneity do not imply that
$n^{-1}\sum_i\mathcal{P}_{\gamma,\tau}(g^e_{i,k})$ aligns with the global gradient; local gradients or asymmetric noise can reverse its sign. Equation~\eqref{eq:anchored_map} does not assume away this effect. It retains the raw direction and treats the nonlinear difference as a perturbation whose weight decreases with the stepsize. Thus the rate result covers the implemented anchored update rather than a hypothetical interchange of transformation and averaging.
\end{revision}

\section{Problem Setup and Algorithm}

Let $\mathcal{G}=(\mathcal{V},\mathcal{E})$ be an undirected connected graph with Laplacian $L$. Define $\bsL=L\otimes I_p$, $K_n=I_n-\one\one^\top/n$, $\bsK=K_n\otimes I_p$, $\bsH=(\one\one^\top/n)\otimes I_p$, and $\bar{x}_k=\frac{1}{n}\sum_{i=1}^n x_{i,k}$. The agents minimize~\eqref{eq:prob} using only evaluations of $F_i(\cdot,\xi_i)$.

\begin{assumption}\label{ass:regularity}\label{ass:graph}\label{ass:optset}
The undirected graph $\mathcal{G}$ is connected. The optimal set is nonempty, and the optimal value satisfies $f^*>-\infty$.
\end{assumption}

\begin{assumption}\label{ass:smooth}
\rev{For almost every $\xi_i$, the sample function $F_i(\cdot,\xi_i)$ is $L_f$-smooth.}
\end{assumption}

\begin{assumption}\label{ass:oracle}\label{ass:variance}\label{ass:sampling}
\rev{For every coordinate $j$,
$\EE[(\nabla F_i(x,\xi_i)-\nabla f_i(x))_j^2]\le \zeta^2$, and
$\norm{\nabla f_i(x)-\nabla f(x)}^2\le\varsigma^2$ for all $i,x$.
Conditioned on the history, $\xi_{i,k}$ and the uniformly sampled cardinality-$n_c$ sets $\mathcal S_{i,k}$ are independent across agents and iterations. The same sample $\xi_{i,k}$ is used in all function values forming one finite-difference estimate.}
\end{assumption}

\begin{revision}
Let $e_\ell\in\RR^p$ denote the $\ell$th canonical unit vector. At iteration $k$, agent $i$ samples $\mathcal{S}_{i,k}\subseteq\{1,\ldots,p\}$ and constructs either
\begin{align}
g^e_{i,k}
&=\frac{p}{n_c}\sum_{\ell\in\mathcal{S}_{i,k}}
\frac{F_i(x_{i,k}+\delta_{i,k}e_\ell,\xi_{i,k})-F_i(x_{i,k},\xi_{i,k})}
{\delta_{i,k}}e_\ell,\label{eq:est1}\\[-1mm]
g^e_{i,k}
&=\frac{p}{2n_c\delta_{i,k}}\sum_{\ell\in\mathcal{S}_{i,k}}
\Bigl[F_i(x_{i,k}+\delta_{i,k}e_\ell,\xi_{i,k}) \notag\\[-1mm]
&\hspace{28mm}-F_i(x_{i,k}-\delta_{i,k}e_\ell,\xi_{i,k})\Bigr]e_\ell.
\label{eq:est2}
\end{align}
Each agent queries only $F_i$; agents do not share another agent's local sampling burden. The local calls are executed in parallel.
\end{revision}

\begin{remark}\label{rem:indnoise}
\rev{Assumptions~\ref{ass:smooth}--\ref{ass:oracle} specify a stochastic-function/common-random-number oracle. If separate calls return $f_i(x)+w$ with independent variance $\nu_i^2$, the estimator second moment has an additional term of order $p^2\nu_i^2/(n_c\delta_{i,k}^2)$. The theorems below do not silently absorb this term; independent-noise implementations require averaging, a nonvanishing radius, or a modified rate.}
\end{remark}

\begin{algorithm}[t]
\caption{ZOOM-PB}
\label{alg:zoompb}
\begin{algorithmic}[1]
\STATE \textbf{Input}: $\alpha>0$, $\gamma\in[1/2,1]$, $\rev{\tau>0}$, steps $\{\eta_k\}$, weights $\{\beta_k\}\subset[0,1]$, radii $\{\delta_{i,k}\}$.
\STATE \textbf{Initialize}: $x_{i,0}\in\RR^p$ for all $i\in[n]$.
\FOR{$k=0,1,\ldots$}
\FOR{each agent $i$ in parallel}
\STATE Exchange $x_{i,k}$ with neighbors $j\in\mathcal{N}_i$.
\STATE Form $g^e_{i,k}$ by~\eqref{eq:est1} or~\eqref{eq:est2}.
\STATE Set $s_{i,k}=\rev{\amap{g^e_{i,k}}}$.
\STATE Update
\begin{equation*}
x_{i,k+1}=x_{i,k}-\alpha\sum_{j=1}^{n}L_{ij}x_{j,k}
-\eta_k\rev{s_{i,k}}.
\end{equation*}
\ENDFOR
\ENDFOR
\end{algorithmic}
\end{algorithm}

\rev{When $\gamma=1$ or $\beta_k=0$, $s_{i,k}=g^e_{i,k}$ and
Algorithm~\ref{alg:zoompb} reduces to ZOOM, the raw-ZO ablation without
powerball shaping. Computing $s_{i,k}$ is local:
it introduces neither an auxiliary vector recursion nor another message.}

\begin{revision}
Let $\mathcal F_k$ be the history before the oracle calls at time $k$, and define
\begin{align}
r_{i,k}&=\pbmap{g^e_{i,k}}-g^e_{i,k},\qquad
s_{i,k}=g^e_{i,k}+\beta_k r_{i,k}.
\label{eq:anchored_direction}
\end{align}
$\bar g^e_k=n^{-1}\sum_i g^e_{i,k}$,
$\bar r_k=n^{-1}\sum_i r_{i,k}$,
$\chi_k=\norm{\bsK\bsx_k}^2/n$, and
$\delta_k=\max_i\delta_{i,k}$.

The analysis never assigns a favorable sign to $\bar r_k$. Instead,
Lemma~\ref{lem:key} derives its second moment from
Assumptions~\ref{ass:smooth}--\ref{ass:oracle} and uses
$\beta_k^2=\mathcal O(\eta_k/n)$ to absorb the resulting bias. This is the
only additional schedule used for the nonlinear term.
\end{revision}

\begin{revision}
The recursion also makes its resource costs explicit. Per agent and round,
\eqref{eq:est1} uses $n_c+1$ function values and~\eqref{eq:est2} uses
$2n_c$. Over $T$ rounds the corresponding network totals are
$nT(n_c+1)$ and $2nTn_c$. Because each agent sends one $p$-vector to every
neighbor, communication costs $2|\mathcal E|Tp$ transmitted scalars. The
nonlinear map and $(\gamma,\tau,\beta_k)$ do not alter these counts. The
rates below assume bounded $p/n_c$; otherwise that ratio remains explicit
in the oracle-moment constants.
\end{revision}

\section{Convergence Results}

\begin{revision}
Let $\bm s_k=\col(s_{1,k},\ldots,s_{n,k})$ and
$\bar s_k=n^{-1}\sum_i s_{i,k}$. The two exact recursions are
\begin{align}
\bar x_{k+1}&=\bar x_k-\eta_k\bar s_k,\label{eq:average_recursion}\\
\bsK\bsx_{k+1}
&=\bsK(I_{np}-\alpha\bsL)\bsx_k-\eta_k\bsK\bm s_k.
\label{eq:disagreement_recursion}
\end{align}
Thus descent depends on the anchored network average $\bar s_k$, while
consensus depends on the distinct quantity $\bsK\bm s_k$.
For $0<\alpha<2/\lambda_n(L)$, set
$\rho_\alpha=\max_{2\le i\le n}|1-\alpha\lambda_i(L)|<1$.
\end{revision}

For compactness, write $\EE_k[\cdot]=\EE[\cdot\mid\mathcal F_k]$.
\begin{lemma}\label{lem:key}
\begin{revision}
Suppose Assumptions~\ref{ass:regularity}--\ref{ass:oracle} hold and
$p/n_c\le\kappa_c$. There are positive constants independent of $k,T,n,p$ such that
\begin{subequations}\label{eq:direction_moments}
\begin{align}
\frac1n\sum_i\EE_k[\norm{g^e_{i,k}}^2+\norm{r_{i,k}}^2]
&\le R_g\norm{\nabla f(\bar x_k)}^2+R_x\chi_k\notag\\
&\quad+R_0p+R_\delta p^2\delta_k^2,\label{eq:residual_moment}\\
\EE_k[\norm{\bar s_k}^2]
&\le B_g\norm{\nabla f(\bar x_k)}^2+B_x\chi_k\notag\\
&\quad+B_0\left(\frac{p}{n}+\beta_k^2p\right)
+B_\delta p^2\delta_k^2,
\label{eq:moment}\\
\EE_k[\norm{\bsK\bm s_k}^2]
&\le nV_g\norm{\nabla f(\bar x_k)}^2+nV_x\chi_k\notag\\
&\quad+nV_0p+nV_\delta p^2\delta_k^2.
\label{eq:centered_moment}
\end{align}
\end{subequations}
For fixed $\kappa_\beta>0$ and $\beta_{\rm d}\in(0,1)$ satisfying
$16R_g\beta_{\rm d}^2\le1$, there are positive
$c_x,c_g,C_0,C_\delta$, and $\bar\eta$, independent of $k,T,p,n$, such that if
\begin{equation}\label{eq:beta_schedule}
 \eta_k\le\frac{\bar\eta}{n},\qquad
 0\le\beta_k\le\bar\beta_k:=
 \min\!\left\{\beta_{\rm d},\kappa_\beta\sqrt{\frac{\eta_k}{n}}\right\},
\end{equation}
then a nonnegative Lyapunov function
$W_k\asymp n(f(\bar x_k)-f^*)+\chi_k$ satisfies
\begin{equation}
\begin{split}
\EE[W_{k+1}]
\le{}& \EE[W_k]-c_x\EE[\chi_k]\\
&-c_g n\eta_k\EE[\norm{\nabla f(\bar x_k)}^2]\\
&+C_0p\eta_k^2+C_\delta np^2\eta_k\delta_k^2.
\end{split}
\label{eq:lyap}
\end{equation}
The constants and $\bar\eta$ may depend on
$(\gamma,\tau,\beta_{\rm d},\kappa_\beta,\kappa_c,L_f,\zeta,\varsigma)$ and
$(1-\rho_\alpha)^{-1}$, but not on $T$. The displayed $n,p$ scaling is
uniform for graph families with $1-\rho_\alpha$ bounded away from zero.
\end{revision}
\end{lemma}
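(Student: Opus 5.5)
The plan has three stages: first the oracle moments, then the three inequalities in \eqref{eq:direction_moments}, and finally the Lyapunov function $W_k$.

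\textbf{Oracle moments.} Decompose $g^e_{i,k}=\nabla f_i(x_{i,k})+b_{i,k}+w_{i,k}$. Here $b_{i,k}=\EE_k[g^e_{i,k}]-\nabla f_i(x_{i,k})$ is the smoothing bias and $w_{i,k}$ is zero-mean noise with respect to $\mathcal F_k$.

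\begin{itemize}
\item Assumption~\ref{ass:smooth} and the common-sample structure in Assumption~\ref{ass:oracle} give, per coordinate, $|(b_{i,k})_\ell|\le L_f\delta_{i,k}/2$. Hence $\norm{b_{i,k}}^2\le L_f^2p\delta_k^2/4\le L_f^2p^2\delta_k^2$.
\item Uniform sampling of $\mathcal S_{i,k}$ with $p/n_c\le\kappa_c$, together with the coordinatewise variance bound $\zeta^2$, gives $\EE_k\norm{w_{i,k}}^2\lesssim\kappa_c\norm{\nabla f_i(x_{i,k})}^2+\kappa_c p\zeta^2+\kappa_c L_f^2p^2\delta_k^2$.
\item We then write $\nabla f_i(x_{i,k})=\nabla f(\bar x_k)+(\nabla f_i(x_{i,k})-\nabla f_i(\bar x_k))+(\nabla f_i(\bar x_k)-\nabla f(\bar x_k))$. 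Averaging over $i$ with smoothness and the heterogeneity bound $\varsigma^2$ yields
\[
\frac1n\sum_i\norm{\nabla f_i(x_{i,k})}^2\le3\norm{\nabla f(\bar x_k)}^2+3L_f^2\chi_k+3\varsigma^2 .
\]
The constant $\varsigma^2$ is absorbed into $R_0p$ since $p\ge1$.
\end{itemize}

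\textbf{Residual and the three moment bounds.} For the residual, use the scalar inequality $|\mathcal P_{\gamma,\tau}(g)|^2=\tau^{2-2\gamma}|g|^{2\gamma}\le(1-\gamma)\tau^2+\gamma g^2$, which is Young's inequality with exponents $1/(1-\gamma)$ and $1/\gamma$. Applied componentwise it gives $\norm{r_{i,k}}^2\le2\norm{\mathcal P_{\gamma,\tau}(g^e_{i,k})}^2+2\norm{g^e_{i,k}}^2\le2p\tau^2+4\norm{g^e_{i,k}}^2$. This is a purely second-moment bound that never uses a sign of $r$, and it yields \eqref{eq:residual_moment}.

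For \eqref{eq:moment}, write $\bar s_k=\bar g^e_k+\beta_k\bar r_k$, so $\norm{\bar s_k}^2\le2\norm{\bar g^e_k}^2+2\beta_k^2\norm{\bar r_k}^2$.
\begin{itemize}
\item In $\bar g^e_k$, the zero-mean parts $w_{i,k}$ are conditionally independent across agents, so their average has variance of order $1/n$ times the per-agent bound. This produces the $p/n$ term. The mean part is bounded by $\norm{\nabla f(\bar x_k)}^2+L_f^2\chi_k+\norm{\bar b_k}^2$; note that the heterogeneity cancels in the average.
\item For $\norm{\bar r_k}^2\le\frac1n\sum_i\norm{r_{i,k}}^2$, apply \eqref{eq:residual_moment}. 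This gives the $\beta_k^2p$ term, and the gradient part is absorbed into $B_g$ since $\beta_k\le\beta_{\rm d}$.
\end{itemize}

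For \eqref{eq:centered_moment}, use $\norm{\bsK\bm s_k}^2\le\norm{\bm s_k}^2\le2\sum_i(\norm{g^e_{i,k}}^2+\norm{r_{i,k}}^2)$ and then \eqref{eq:residual_moment} again. No $1/n$ gain is available here, but none is needed.

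\textbf{Lyapunov inequality.} Take $W_k=n(f(\bar x_k)-f^*)+\mu\chi_k$ for a constant $\mu$ chosen below.

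\emph{Descent part.} By $L_f$-smoothness and \eqref{eq:average_recursion},
\[
f(\bar x_{k+1})\le f(\bar x_k)-\eta_k\langle\nabla f(\bar x_k),\bar s_k\rangle+\tfrac{L_f}{2}\eta_k^2\norm{\bar s_k}^2 .
\]
Conditionally, $\EE_k\bar s_k=\nabla f(\bar x_k)+e_k+\beta_k\EE_k\bar r_k$, where $\norm{e_k}^2\lesssim L_f^2\chi_k+p^2\delta_k^2$. The cross terms are handled by Young's inequality with weight $1/4$:
\[
\eta_k\beta_k|\langle\nabla f,\EE_k\bar r_k\rangle|\le\tfrac{\eta_k}{8}\norm{\nabla f}^2+2\eta_k\beta_k^2\tfrac1n\textstyle\sum_i\EE_k\norm{r_{i,k}}^2 .
\]
The gradient part of this residual contributes $2\eta_k\beta_{\rm d}^2R_g\norm{\nabla f}^2\le\frac{\eta_k}{8}\norm{\nabla f}^2$; this is exactly where $16R_g\beta_{\rm d}^2\le1$ is used. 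The constant part $2\eta_k\beta_k^2R_0p\le2\kappa_\beta^2R_0p\eta_k^2/n$ is where $\beta_k^2\le\kappa_\beta^2\eta_k/n$ converts the nonlinear bias into the $C_0p\eta_k^2$ term after multiplying by $n$. The $\chi_k$ terms it generates are charged to the consensus part.

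\emph{Consensus part.} From \eqref{eq:disagreement_recursion}, $\norm{\bsK(I-\alpha\bsL)\bsx_k}\le\rho_\alpha\norm{\bsK\bsx_k}$. Young's inequality with parameter $(1-\rho_\alpha)$ then gives
\[
\EE_k\chi_{k+1}\le\tfrac{1+\rho_\alpha^2}{2}\chi_k+\tfrac{C\eta_k^2}{(1-\rho_\alpha)n}\EE_k\norm{\bsK\bm s_k}^2 .
\]
We insert \eqref{eq:centered_moment}. Its $\eta_k^2V_0p$ term contributes to $C_0p\eta_k^2$, and its gradient term $\eta_k^2V_g\norm{\nabla f}^2$ is dominated by $n\eta_k\norm{\nabla f}^2/8$ once $\eta_k\le\bar\eta/n$.

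\emph{Choosing constants.} Pick $\mu$ so that the contraction $\mu(1-\rho_\alpha^2)/2$ beats the $\chi_k$ forcing from the descent part, which is $\lesssim nL_f^2\eta_k+\mu\eta_k^2V_x\lesssim\bar\eta L_f^2+\mu\bar\eta^2V_x$. Then shrink $\bar\eta$ to leave a margin $c_x$. Taking total expectations yields \eqref{eq:lyap}.

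\textbf{Main obstacle.} The delicate step is the bookkeeping of $n$-scalings. The $p/n$ variance in \eqref{eq:moment} relies on cross-agent conditional independence. The centered moment \eqref{eq:centered_moment} carries a factor $n$ that must be offset by $\eta_k\le\bar\eta/n$. The residual bias must enter only through $\beta_k^2\le\kappa_\beta^2\eta_k/n$ so that it lands in $C_0p\eta_k^2$ with no extra factor of $n$. I would first check these scalings with all three constraints on $\eta_k$ and $\beta_k$ active simultaneously.
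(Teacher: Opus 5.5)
Your proposal is correct and follows essentially the same route as the paper's proof in Appendices A--B. The shared steps are: raw bias/variance bounds combined with the heterogeneity bound; a scalar second-moment bound on the powerball residual; conditional independence and Jensen for the averages; the projector inequality for $\bsK\bm s_k$; an unsigned Young bound on the residual cross term using $16R_g\beta_{\rm d}^2\le1$ and $\beta_k^2\le\kappa_\beta^2\eta_k/n$; and a fixed-weight Lyapunov combination in which $n\eta_k\le\bar\eta$ absorbs the $n\eta_k\chi_k$ forcing. The differences are cosmetic, e.g.\ your Young-type bound $\tau^{2-2\gamma}|g|^{2\gamma}\le(1-\gamma)\tau^2+\gamma g^2$ in place of the paper's $|a|^{2\gamma}\le1+a^2$.
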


\begin{proof}
\begin{revision}
Let $m_{i,k}=\EE_k[g^e_{i,k}]$. The coordinate-estimator calculation in
Appendix~A gives
\begin{align}
\left\|\frac1n\sum_i m_{i,k}-\nabla f(\bar x_k)\right\|^2
&\le C\chi_k+Cp\delta_k^2,\label{eq:raw_mean_short}\\
\frac1n\sum_i\EE_k\norm{g^e_{i,k}}^2
&\le C\big(\norm{\nabla f(\bar x_k)}^2+\chi_k+p\big)\notag\\
&\quad+Cp^2\delta_k^2.
\label{eq:raw_local_short}
\end{align}
For every scalar $a$,
\begin{equation}\label{eq:scalar_residual}
 |\mathcal P_{\gamma,\tau}(a)-a|^2
 \le C_{\gamma,\tau}(1+a^2),
\end{equation}
because $|a|^{2\gamma}\le1+a^2$. Summation proves
\eqref{eq:residual_moment}. Conditional independence gives the $p/n$
term for $\bar g^e_k$, while Jensen's inequality bounds $\bar r_k$;
using $\bar s_k=\bar g^e_k+\beta_k\bar r_k$ proves~\eqref{eq:moment}.
The projector inequality
$\norm{\bsK\bm s_k}^2\le\sum_i\norm{s_{i,k}}^2$ proves
\eqref{eq:centered_moment}.

Let $g_k=\nabla f(\bar x_k)$. Equation~\eqref{eq:raw_mean_short} and
Young's inequality give
\[
 \left\langle g_k,\frac1n\sum_i m_{i,k}\right\rangle
 \ge\frac34\norm{g_k}^2-C\chi_k-Cp^2\delta_k^2.
\]
Without assigning a sign to $\bar r_k$,
\[
 \beta_k|\langle g_k,\EE_k\bar r_k\rangle|
 \le\frac18\norm{g_k}^2+2\beta_k^2\EE_k\norm{\bar r_k}^2.
\]
By Jensen's inequality and~\eqref{eq:residual_moment}, the second term
contributes at most $2R_g\beta_k^2\norm{g_k}^2\le
\norm{g_k}^2/8$ under~\eqref{eq:beta_schedule}. Hence
\begin{equation}\label{eq:anchored_alignment}
 \langle g_k,\EE_k\bar s_k\rangle
 \ge\frac12\norm{g_k}^2-C\chi_k-Cp^2\delta_k^2-Cp\beta_k^2.
\end{equation}

Using $\rho_\alpha<1$, Young's inequality in
\eqref{eq:disagreement_recursion}, and~\eqref{eq:centered_moment} yields
\begin{align}
\EE[\chi_{k+1}\mid\mathcal F_k]
&\le(1-\ell_\alpha)\chi_k\notag\\
&\quad+C\eta_k^2\big(\norm{g_k}^2+\chi_k+p+p^2\delta_k^2\big)
\label{eq:consensus_step}
\end{align}
for $\ell_\alpha>0$. Smoothness,~\eqref{eq:anchored_alignment}, and
$n\eta_k\beta_k^2p\le\kappa_\beta^2p\eta_k^2$ give the corresponding
objective decrease with remainder
$C n\eta_k\chi_k+C p\eta_k^2+Cnp^2\eta_k\delta_k^2$.
A fixed positive combination with~\eqref{eq:consensus_step} absorbs the
disagreement term whenever $n\eta_k\le\bar\eta$, proving~\eqref{eq:lyap}.
\end{revision}
\end{proof}

\begin{theorem}\label{thm:nonconvex}
\begin{revision}
Suppose Assumptions~\ref{ass:regularity}--\ref{ass:oracle} hold,
$p/n_c\le\kappa_c$, $W_0\le C_Wn$, and
\begin{align*}
0&<\alpha<\frac{2}{\lambda_n(L)},\\
\eta_k&=\eta=\sqrt{\frac{n}{pT}},\\
0&\le\beta_k\le\bar\beta_k,\qquad
\delta_{i,k}&\le
\frac{\kappa_\delta}
{p^{3/4}n^{1/4}(k+1)^{1/4}}.
\end{align*}
If $T\ge n^3/(p\bar\eta^2)$, then
\begin{align}
\frac1T\sum_{k=0}^{T-1}
\EE[\norm{\nabla f(\bar x_k)}^2]
&=\mathcal O\!\left(\sqrt{\frac{p}{nT}}\right),
\label{eq:nonconvex_rate}\\
\frac1T\sum_{k=0}^{T-1}
\EE\!\left[\frac1n\sum_i\norm{x_{i,k}-\bar x_k}^2\right]
&=\mathcal O\!\left(\frac{n}{T}\right).
\label{eq:consensus_rate}
\end{align}
The hidden constants may depend on the quantities listed in
Lemma~\ref{lem:key}, but not on $T$. In particular, the result does not
require alignment of $n^{-1}\sum_i\mathcal P_{\gamma,\tau}(g^e_{i,k})$.
\end{revision}
\end{theorem}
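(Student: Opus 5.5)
The argument telescopes the Lyapunov inequality~\eqref{eq:lyap} of Lemma~\ref{lem:key} with the constant stepsize $\eta=\sqrt{n/(pT)}$. The first step is to check that the lemma applies. The condition $\eta\le\bar\eta/n$ reads $\sqrt{n/(pT)}\le\bar\eta/n$, that is, $T\ge n^3/(p\bar\eta^2)$, which is exactly the stated horizon requirement. The weight condition $\beta_k\le\bar\beta_k$ is assumed directly, and $\alpha<2/\lambda_n(L)$ gives $\rho_\alpha<1$. Hence~\eqref{eq:lyap} holds for every $k\le T-1$ with constants independent of $T$.

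Summing~\eqref{eq:lyap} from $k=0$ to $T-1$ and using $W_T\ge0$ gives
\[
c_x\sum_{k=0}^{T-1}\EE[\chi_k]+c_gn\eta\sum_{k=0}^{T-1}\EE\norm{\nabla f(\bar x_k)}^2
\le W_0+C_0p\eta^2T+C_\delta np^2\eta\sum_{k=0}^{T-1}\delta_k^2 .
\]
The three terms on the right are bounded as follows.
\begin{itemize}
\item $W_0\le C_Wn$ by assumption.
\item $p\eta^2T=n$.
\item The radius bound gives $p^2\delta_k^2\le\kappa_\delta^2\sqrt{p}/\bigl(\sqrt n\sqrt{k+1}\bigr)$. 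Since $\sum_{k<T}(k+1)^{-1/2}\le2\sqrt T$,
\[
np^2\eta\sum_{k<T}\delta_k^2\le 2\kappa_\delta^2\, n\sqrt{\tfrac{n}{pT}}\sqrt{\tfrac{p}{n}}\sqrt T=2\kappa_\delta^2 n .
\]
\end{itemize}
The whole right-hand side is therefore $\mathcal O(n)$. This cancellation is why the radius exponent $p^{-3/4}n^{-1/4}(k+1)^{-1/4}$ is chosen. Dividing the gradient term by $c_gn\eta T$ gives
\[
\frac1T\sum_{k<T}\EE\norm{\nabla f(\bar x_k)}^2\le\frac{\mathcal O(n)}{n\eta T}=\mathcal O\!\left(\frac{1}{\eta T}\right)=\mathcal O\!\left(\sqrt{\frac{p}{nT}}\right),
\]
which is~\eqref{eq:nonconvex_rate}. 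Dividing the disagreement term by $c_xT$ gives $\frac1T\sum_{k<T}\EE[\chi_k]=\mathcal O(n/T)$. Since $\chi_k=\norm{\bsK\bsx_k}^2/n=\frac1n\sum_i\norm{x_{i,k}-\bar x_k}^2$, this is~\eqref{eq:consensus_rate}.

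The main obstacle is bookkeeping rather than a new estimate: the $n$ and $p$ dependence of every term must be tracked so that all three right-hand contributions are uniformly $\mathcal O(n)$, with constants depending only on the quantities listed in Lemma~\ref{lem:key}. The radius sum deserves the most care. Here $\delta_k=\max_i\delta_{i,k}$, and the per-agent bound holds uniformly in $i$, so the bound transfers to $\delta_k$ directly. Finally, the argument never uses any alignment of $n^{-1}\sum_i\mathcal P_{\gamma,\tau}(g^e_{i,k})$. The nonlinear residual enters only through~\eqref{eq:lyap}, where the lemma already absorbed it via $\beta_k^2=\mathcal O(\eta/n)$.
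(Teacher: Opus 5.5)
Your proposal is correct and follows the paper's proof essentially verbatim. You telescope~\eqref{eq:lyap} under the constant stepsize, bound $\sum_{k<T}\delta_k^2\le 2\kappa_\delta^2\sqrt T/(p^{3/2}\sqrt n)$ so that $W_0$, $C_0pT\eta^2=C_0n$ and the radius term are all $\mathcal O(n)$, divide by $nT\eta$ and by $T$, and identify $T\ge n^3/(p\bar\eta^2)$ with the requirement $n\eta\le\bar\eta$, exactly as in Appendix~C.
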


\begin{proof}
\begin{revision}
Summing~\eqref{eq:lyap} gives
\begin{align}
c_x\sum_{k<T}\EE[\chi_k]
&+c_gn\eta\sum_{k<T}\EE[\norm{\nabla f(\bar x_k)}^2]\notag\\
&\le W_0+C_0pT\eta^2\notag\\
&\quad+C_\delta np^2\eta\sum_{k<T}\delta_k^2.
\label{eq:sum_bound}
\end{align}
The radius schedule implies
$\sum_{k<T}\delta_k^2\le
C\sqrt T/(p^{3/2}\sqrt n)$. Dividing~\eqref{eq:sum_bound} by
$nT\eta$ and using $W_0\le C_Wn$ proves~\eqref{eq:nonconvex_rate}.
Dividing its first term by $T$ proves~\eqref{eq:consensus_rate}.
The condition on $T$ is exactly the requirement
$n\eta\le\bar\eta$ used to absorb disagreement in Lemma~\ref{lem:key}.
\end{revision}
\end{proof}

\begin{assumption}\label{ass:pl}
\rev{The global objective satisfies the Polyak--\L{}ojasiewicz condition
$\frac12\norm{\nabla f(x)}^2\ge\nu(f(x)-f^*)$ for some $\nu>0$.
This is a nonconvex gradient-dominance condition and does not assert convexity.}
\end{assumption}

\begin{theorem}\label{thm:pl}
\begin{revision}
Suppose Assumptions~\ref{ass:regularity}--\ref{ass:pl} hold and
$p/n_c\le\kappa_c$. Choose
\[
\eta_k=\frac{\kappa_\eta}{k+t_0},\qquad
0\le\beta_k\le\bar\beta_k,\qquad
\delta_{i,k}\le\kappa_\delta
\sqrt{\frac{\eta_k}{np}},
\]
where $\kappa_\eta$ is chosen so that
$a:=c_V\kappa_\eta>1$ for the Lyapunov contraction below, and $t_0>a$ is
large enough that $\eta_k\le\bar\eta/n$. Then
\begin{equation}\label{eq:pl_finite}
 \EE[f(\bar x_T)-f^*]
 \le \frac{C}{n}\left(\frac{t_0}{T+t_0}\right)^a W_0
 +\frac{Cp}{n(T+t_0)}.
\end{equation}
Moreover, for some $\rho\in(0,1)$,
\begin{equation}\label{eq:pl_consensus_finite}
 \EE[\chi_T]\le C\rho^T\chi_0+
 \frac{Cp}{(T+t_0)^2}+
 \frac{CW_0t_0^a}{n(T+t_0)^{a+2}}.
\end{equation}
Thus the post-transient objective term is $\mathcal O(p/(nT))$.
\end{revision}
\end{theorem}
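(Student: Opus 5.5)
We plan to start from the one-step Lyapunov inequality~\eqref{eq:lyap} of Lemma~\ref{lem:key} and convert it into a contraction using Assumption~\ref{ass:pl}. The stepsize $\eta_k=\kappa_\eta/(k+t_0)$ with $t_0$ large gives $\eta_k\le\bar\eta/n$, and $\beta_k\le\bar\beta_k$ is assumed, so~\eqref{eq:lyap} holds for every $k\ge0$.

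\textbf{Step 1: contraction of $W_k$.} The PL condition gives
\[
\norm{\nabla f(\bar x_k)}^2\ge 2\nu\,(f(\bar x_k)-f^*),
\]
so the gradient term in~\eqref{eq:lyap} dominates $2\nu c_g n\eta_k(f(\bar x_k)-f^*)$. Since $W_k\asymp n(f(\bar x_k)-f^*)+\chi_k$, there is a constant $c_W>0$ with
\[
c_x\chi_k+c_g n\eta_k\norm{\nabla f(\bar x_k)}^2\ge c_V\eta_k W_k .
\]
Here $c_V=\min\{2\nu c_g c_1, c_x/\eta_{\max}\}$ up to the equivalence constants $c_1$ of $W_k$, and $\eta_k$ is bounded so that $c_x\chi_k\ge c_V\eta_k\chi_k$. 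The radius schedule gives
\[
C_\delta np^2\eta_k\delta_k^2\le C_\delta\kappa_\delta^2 p\,\eta_k^2 .
\]
Together these yield
\[
\EE[W_{k+1}]\le(1-c_V\eta_k)\EE[W_k]+C'p\,\eta_k^2 .
\]

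\textbf{Step 2: unroll the recursion.} With $a=c_V\kappa_\eta>1$ and $t_0>a$, the factor $1-a/(k+t_0)$ lies in $(0,1)$. Use $\prod_{j=k}^{T-1}(1-a/(j+t_0))\le((k+t_0)/(T+t_0))^a$ (via $1-x\le e^{-x}$ and an integral comparison). Then
\[
\EE W_T\le\Bigl(\frac{t_0}{T+t_0}\Bigr)^aW_0+C'p\kappa_\eta^2\sum_{k<T}\frac{(k+t_0)^{a-2}}{(T+t_0)^a}.
\]
Because $a>1$, the sum is at most $C(T+t_0)^{a-1}$, so the second term is $Cp/(T+t_0)$. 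Dividing by $n$ and using $n(f(\bar x_T)-f^*)\lesssim W_T$ gives~\eqref{eq:pl_finite}.

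\textbf{Step 3: consensus bound.} Here we use the consensus recursion~\eqref{eq:consensus_step} directly:
\[
\EE\chi_{k+1}\le(1-\ell_\alpha+C\eta_k^2)\EE\chi_k+C\eta_k^2\bigl(\EE\norm{g_k}^2+p+p^2\delta_k^2\bigr).
\]
Once $t_0$ is large, the contraction factor is at most $\rho:=1-\ell_\alpha/2$. By $L_f$-smoothness and $f^*>-\infty$,
\[
\norm{g_k}^2\le 2L_f(f(\bar x_k)-f^*)\lesssim W_k/n .
\]
Insert the Step 2 bound for $\EE W_k$, and note $p^2\delta_k^2\le\kappa_\delta^2 p\eta_k/n\le Cp$. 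Unrolling the geometric recursion then gives
\[
\EE\chi_T\le\rho^T\chi_0+C\sum_k\rho^{T-1-k}\eta_k^2\Bigl(p+\frac{W_0t_0^a}{n(k+t_0)^a}+\frac{p}{n(k+t_0)}\Bigr).
\]

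\textbf{Main obstacle.} The delicate step is the geometric-versus-polynomial convolution $\sum_k\rho^{T-1-k}h(k)$ for polynomially decaying $h$. We plan to split the sum at $k=T/2$. On $k\ge T/2$ we have $h(k)\le Ch(T)$ and the geometric sum is bounded by $1/(1-\rho)$. On $k<T/2$, $\rho^{T/2}$ beats any polynomial factor, and that tail is absorbed into the stated terms. This yields $Cp/(T+t_0)^2$ and $CW_0t_0^a/(n(T+t_0)^{a+2})$, which is~\eqref{eq:pl_consensus_finite}. All constants must stay independent of $T$, so $t_0$ depends only on $a$, $\bar\eta$, $n$ and $\ell_\alpha$.
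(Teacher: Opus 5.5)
Your proposal is correct and follows essentially the same route as the paper's proof in Appendix~D. PL plus the radius schedule turn \eqref{eq:lyap} into a $(1-c_V\eta_k)$ contraction with $\mathcal O(p\eta_k^2)$ forcing, the product estimate unrolls it, and the consensus bound comes from \eqref{eq:consensus_step}, $\norm{\nabla f}^2\le 2L_f(f-f^*)$, and a geometric--polynomial convolution split at $T/2$. The only cosmetic point is that unrolling actually uses $\prod_{j=k+1}^{T-1}$ rather than $\prod_{j=k}^{T-1}$; this costs just a factor $2^a$, which your constant absorbs.
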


\begin{proof}
\begin{revision}
The PL inequality and~\eqref{eq:lyap} imply, for a constant $c_V>0$,
\[
\EE[W_{k+1}]
\le(1-c_V\eta_k)\EE[W_k]+C_0p\eta_k^2+
C_\delta np^2\eta_k\delta_k^2.
\]
The chosen radius makes the last term at most $C p\eta_k^2$; the
nonlinear residual has the same order by~\eqref{eq:beta_schedule}. Thus
\[
 u_{k+1}\le\left(1-\frac{a}{k+t_0}\right)u_k
 +\frac{Cp}{(k+t_0)^2},\qquad u_k=\EE[W_k].
\]
The product estimate in Appendix~D gives
\[
 u_T\le C\left(\frac{t_0}{T+t_0}\right)^a u_0
 +\frac{Cp}{T+t_0},
\]
which proves~\eqref{eq:pl_finite}. Finally,~\eqref{eq:consensus_step},
$\norm{\nabla f(x)}^2\le2L_f(f(x)-f^*)$, and this finite-time bound
give a geometrically stable recursion whose forcing is
$Cp/(k+t_0)^2+C W_0t_0^a/[n(k+t_0)^{a+2}]$.
Its convolution proves~\eqref{eq:pl_consensus_finite}.
\end{revision}
\end{proof}

\begin{remark}\label{rem:endpoints}
\rev{For $\gamma=1$ or $\beta_k=0$, $r_{i,k}=0$ and the analysis reduces
to the usual ZOOM proof. For $\gamma<1$,
schedule~\eqref{eq:beta_schedule} controls the possibly misaligned nonlinear
residual.}
\begin{revision}
The pair $(\gamma,\tau)$ sets the nonlinear response. Smaller $\gamma$
increases both weak-component actuation and sensitivity to noise near
zero; $\tau$ records the normalization at which amplification changes
to attenuation. The weight $\beta_k$ controls departure from the raw
ZO direction. Equation~\eqref{eq:beta_schedule} gives the admissible range
$\beta_k\in[0,\bar\beta_k]$. The condition $16R_g\beta_{\rm d}^2\le1$
makes the worst-case residual subordinate to raw descent; the
stepsize-dependent cap preserves the rate order.
None of these parameters changes the query or message
budget, and $\beta_k$ requires no stored vector state. They should be
reported with the objective scaling and selected by a validation or
noise-sensitivity study. The radius schedules in the
theorems correspond to the common-random-number oracle; independent
measurement noise requires the modification in Remark~\ref{rem:indnoise}.
\end{revision}
\end{remark}

\section{Numerical Examples}

\subsection{Black-Box Binary Classification}

\begin{revision}
We use the nonlinear least-squares benchmark in~\cite{liu2018zeroth},
$f_i(x)=(y_i-\phi(x;a_i))^2$, where
$\phi(x;a)=1/(1+\exp(-a^\top x))$. The synthetic covariates are drawn
from $\mathcal N(0,I_p)$. With $x_\star=\one_p/\sqrt p$, labels are
$y=\bm{1}\{a^\top x_\star+\epsilon\ge0\}$, where
$\epsilon\sim\mathcal N(0,0.1^2)$; the training and test sets contain
$2000$ and $200$ samples.
We set $p=100$, $n=10$, $n_c=10$, batch size one, $T=10^4$, and use a
connected Erd\H{o}s--R\'enyi graph. All eight methods use the same data,
graph, seed protocol, 20 function queries per agent per round, and one
broadcast primal-size vector. The coordinate family uses the two-sided
estimator~\eqref{eq:est2} with $n_c=10$. ZOD-PDA and ZOD-PA average ten
one-sided spherical probes, ZO-GDA averages ten symmetric spherical probes,
and ZONE-M uses $J=10$ Gaussian probes. The ten-probe averages are explicit
budget-matching minibatch adaptations of the cited two-point recursions;
their published guarantees retain their original oracle assumptions, and
ZO-GDA's cited analysis is deterministic.  This batching is needed to match
both budgets: running ten original two-call rounds would spend the same
queries but ten primal messages.

ZOOM is the raw-ZO ablation of ZOOM-PB and uses the same $\alpha=0.035$,
$\eta_k=0.08/(k+25)^{0.05}$, and $\delta_k=0.08/(k+1)^{0.20}$.
For ZOOM-PB, $\gamma=0.7$, $\tau=5$, and
$\beta_k=\min\{0.65,8\sqrt{\eta_k/n}\}$.
The tuned ZODIAC and ZODIAC-PB parameters are
$(\eta,\alpha,\beta)=(0.08,0.25,0.1875)$ and $(0.05,0.4,0.4)$.
ZOD-PDA uses $(\eta,\alpha,\beta)=(0.15,0.15,0.10)$.
ZOD-PA uses consensus gain $0.02$ and
$\eta_k=0.18/(k+25)^{0.05}$; ZO-GDA uses
$\eta_k=0.15/(k+25)^{0.05}$; and ZONE-M uses $\rho=1.5$.
All non-ZOOM parameters were selected using pilot seeds 100 and 101, whereas
the displayed means use the disjoint seeds $0,\ldots,9$.
\end{revision}

\begin{figure}[H]
\centering
\includegraphics[height=1.90in,keepaspectratio]{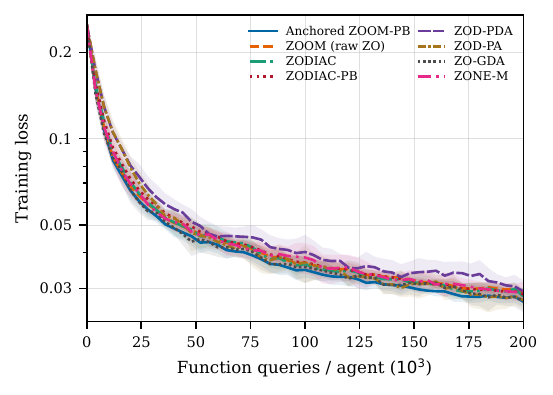}
\caption{\rev{Matched black-box comparison on a log scale (mean $\pm$
s.d., ten new seeds). The horizontal axis counts per-agent queries. Every
method uses 20 queries and broadcasts one primal-size vector per round;
ZOD-PDA, ZODIAC, and ZODIAC-PB additionally maintain a local dual vector.}}
\label{fig:classification}
\end{figure}

\begin{figure*}[t]
\centering
\includegraphics[width=\textwidth]{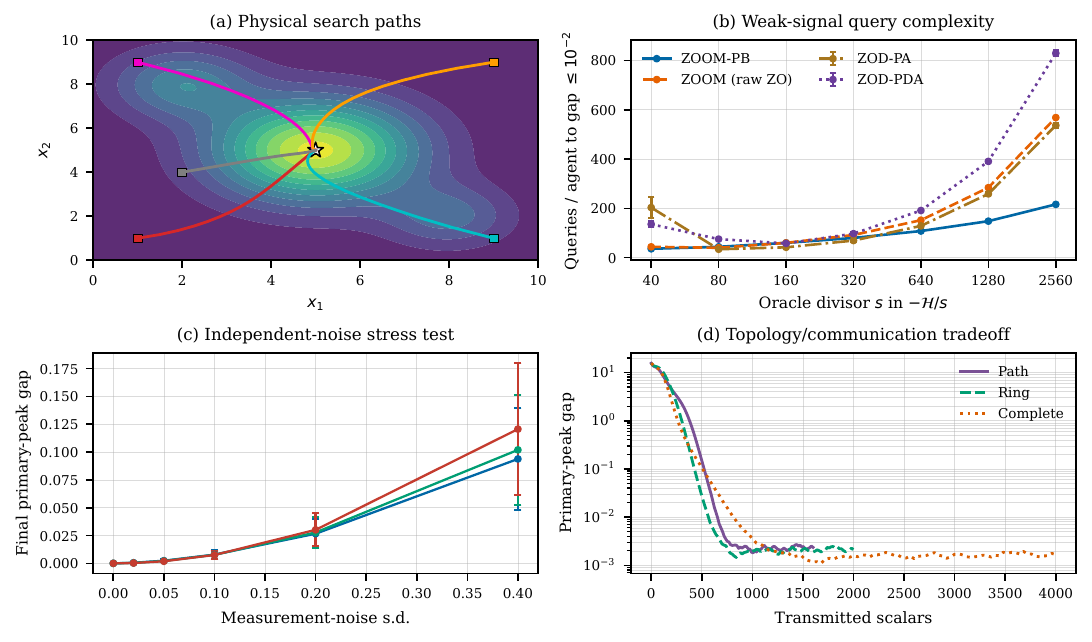}
\caption{\rev{UAV source seeking with $\beta_k=\sqrt{\eta_k/n}$. (a)
Physical query paths for $\gamma=0.7$ without measurement noise;
squares/circles mark initial/final positions. (b) Per-agent queries to gap
$10^{-2}$ as the divisor $s$ in $-\mathcal H/s$ weakens the signal; all
four methods use four calls and one sent vector per round. (c) Final gap
versus independent measurement noise. (d) Topologies versus transmitted
scalars. Error bars show 30-seed s.d. where randomness is present.}}
\label{fig:uav_robustness}
\end{figure*}

\begin{revision}
The terminal losses are $0.0270\pm0.0010$ (ZOOM-PB),
$0.0288\pm0.0011$ (ZOOM), $0.0282\pm0.0014$ (ZODIAC),
$0.0286\pm0.0010$ (ZODIAC-PB), $0.0293\pm0.0027$ (ZOD-PDA),
$0.0273\pm0.0020$ (ZOD-PA), $0.0272\pm0.0024$ (ZO-GDA), and
$0.0293\pm0.0023$ (ZONE-M). Mean test accuracies range from $95.4\%$ to
$96.15\%$. Thus ZOOM-PB has the lowest mean terminal loss in this matched
experiment and uses one fewer local $p$-vector than ZOD-PDA. The overlapping
uncertainty bands and the primal-only ZOD-PA comparison preclude a claim of
uniform dominance, a better asymptotic order, or superior generalization.
\end{revision}

\subsection{UAV Source Seeking}

\begin{revision}
We next use the iterates as physical search/query waypoints for five
UAVs. At this kinematic planning layer, low-level tracking dynamics are
not modeled. The unknown concentration field is
\begin{equation}
\mathcal H(x)=\sum_{j=1}^{3}A_j
\exp\!\left(-\frac{\norm{x-c_j}^2}{2\sigma_j^2}\right),
\end{equation}
with primary peak $c_1=[5,5]^\top$ and two weaker peaks. Agent $i$
observes $Y_i(x)=\mathcal H(x)+\omega_i$ and queries
$F_i(x,\omega_i)=-Y_i(x)/40$; it has no field map or gradient. Each
round uses the two-sided oracle with $n_c=p=2$ and exchanges only
$x_{i,k}$. We set a ring graph, $\alpha=0.055$,
$\eta_k=1.10/(k+1)^{0.12}$, $\delta_k=0.14/(k+1)^{0.20}$,
$\tau=0.05$, and $\beta_k=\sqrt{\eta_k/n}$.
\end{revision}

\begin{revision}
The plotted metric is
$\mathcal H(c_1)-n^{-1}\sum_i\mathcal H(x_{i,k})$. Figure~\ref{fig:uav_robustness}(a)
therefore shows physical query paths, not latent source estimates.
Panel (b) isolates useful weak signals without measurement noise by sweeping
$s\in\{40,80,160,320,640,1280,2560\}$ in $-\mathcal H/s$. The final two
scales were not used for tuning. At $s=1280$, ZOOM-PB reaches gap $10^{-2}$
in 148 queries per agent, versus 284 for ZOOM, $259\pm8$ for ZOD-PA, and
$390\pm7$ for ZOD-PDA. At $s=2560$, the corresponding counts are 216,
568, $536\pm10$, and $829\pm12$. Thus the reductions across the two
held-out scales range from $43\%$ to $74\%$. Every method uses one fixed
pilot-selected tuning across all scales, four calls, and one sent vector
per round; the dual state of ZOD-PDA remains local. This result supports a
finite-time weak-signal advantage, not a better asymptotic order.
Panel (c) exposes the complementary limitation: as independent measurement
noise increases, smaller $\gamma$ attenuates the resulting large
finite-difference inputs. At noise s.d. $0.05$, $\gamma=1$ has the smallest
mean gap; at s.d. $0.4$, $\gamma=0.5$ lowers it by about $22\%$. This stress
test is outside the theorem oracle. Panel (d) shows that the complete graph
improves the terminal gap but uses twice as many transmitted scalars as the
ring. Relative to the path, it lowers the 100-round gap by $29\%$ and
$\chi_{100}$ by $40\%$, but transmits $2.5$ times as many scalars. Thus denser
connectivity is a communication--performance tradeoff rather than free
acceleration; oracle calls and local state are unchanged.
Table~\ref{tab:uav_endpoints} gives the endpoint statistics.
\end{revision}

\begin{table}[H]
\caption{UAV endpoint statistics over 30 runs (mean $\pm$ s.d.).}
\label{tab:uav_endpoints}
\centering
\scriptsize
\setlength{\tabcolsep}{3pt}
\begin{tabular}{@{}lccc@{}}
\toprule
\multicolumn{4}{c}{Ring: final gap $(\times10^{-2})$}\\
$\gamma$ & noise $0.05$ & noise $0.20$ & noise $0.40$ \\
\midrule
$0.5$ & $0.240\pm0.123$ & $2.662\pm1.319$ & $9.379\pm4.564$ \\
$0.7$ & $0.213\pm0.108$ & $2.774\pm1.371$ & $10.194\pm4.972$ \\
$1.0$ & $0.187\pm0.093$ & $3.011\pm1.486$ & $12.062\pm5.918$ \\
\midrule
\multicolumn{4}{c}{$\gamma=0.7$, noise $0.05$, 100 rounds}\\
Graph & scalars & gap $(\times10^{-3})$ & $\chi_{100}(\times10^{-4})$ \\
\midrule
Path & $1600$ & $2.361\pm1.113$ & $5.25$ \\
Ring & $2000$ & $2.135\pm1.082$ & $4.57$ \\
Complete & $4000$ & $1.673\pm0.870$ & $3.17$ \\
\bottomrule
\end{tabular}
\end{table}

All repeated UAV tests use the same five initial waypoints, seeds
$0,\ldots,29$, and projection onto $[0,10]^2$. The weak-signal panel uses
$T=300$; the other repeated panels use $T=100$. Its method-specific parameters
were selected on direction seeds 100--101 at $s\in\{40,160,640\}$ and then
fixed across the seven displayed scales. In the noise and topology tests, the
positive and negative function evaluations draw independent Gaussian noises.
The noise sweep changes only $(\gamma,\sigma_\omega)$; the topology sweep fixes
$\gamma=0.7$, $\sigma_\omega=0.05$, and all update parameters. Its horizontal
budget counts $2|\mathcal E|p$ transmitted scalars per completed round. Thus
neither comparison adds oracle calls or communicated states to the nonlinear
variant.

\FloatBarrier

\section{Conclusion}

\begin{revision}
ZOOM-PB combines coordinate function queries, a componentwise nonlinear
gain, and a one-state recursion. Retaining the raw ZO direction as a
linear anchor avoids an aggregate-alignment assumption: the possibly
misaligned nonlinear component is controlled as a vanishing perturbation.
The resulting method has the stated nonconvex order and PL statistical
term without an auxiliary tracking state. The broad classification comparison
gives comparable endpoints. In the weak-signal UAV sweep, ZOOM-PB uses
$43$--$74\%$ fewer queries at the two held-out weakest scales than ZOOM,
ZOD-PA, and ZOD-PDA under fixed per-method tuning; the noise sweep shows why this is not
uniform superiority. Extending the theory to
independent measurement noise and adaptive selection of
$(\gamma,\tau,\beta_k)$ remains open.
\end{revision}

\section{Appendix A: Raw-Oracle and Nonlinear-Residual Bounds}

Let
\[
m_{i,k}:=\EE[g^e_{i,k}\mid\mathcal F_k].
\]
For the two-sided estimator, define
\begin{align}
d_{i,\ell,k}
&:=\frac{F_i(x_{i,k}+\delta_{i,k}e_\ell,\xi_{i,k})
-F_i(x_{i,k}-\delta_{i,k}e_\ell,\xi_{i,k})}{2\delta_{i,k}},\notag\\
g^e_{i,k}
&=\frac{p}{n_c}\sum_{\ell\in\mathcal S_{i,k}}
d_{i,\ell,k}e_\ell.                                               \label{eq:app_estimator_expanded}
\end{align}
Under the common-random-number model in
Assumptions~\ref{ass:smooth}--\ref{ass:oracle}, uniform sampling without
replacement gives the exact conditional identities
\begin{align}
\EE_{\mathcal S_{i,k}}[g^e_{i,k}\mid\xi_{i,k},\mathcal F_k]
&=\sum_{\ell=1}^p d_{i,\ell,k}e_\ell,\notag\\
\EE_{\mathcal S_{i,k}}[\norm{g^e_{i,k}}^2\mid\xi_{i,k},\mathcal F_k]
&=\frac{p}{n_c}\sum_{\ell=1}^p d_{i,\ell,k}^2.                    \label{eq:app_sampling_identities}
\end{align}
Moreover, the fundamental theorem of calculus and sample-function
smoothness give
\begin{align}
\left|d_{i,\ell,k}
-\partial_\ell F_i(x_{i,k},\xi_{i,k})\right|
&\le\frac{1}{2\delta_{i,k}}
\int_{-\delta_{i,k}}^{\delta_{i,k}}L_f|t|\,dt\notag\\
&=\frac{L_f\delta_{i,k}}{2}.                                      \label{eq:app_coordinate_remainder}
\end{align}
The smoothness and finite second-moment assumptions justify differentiation
under the expectation, so
$\EE[\nabla F_i(x,\xi_i)]=\nabla f_i(x)$. Taking expectations over $\xi_{i,k}$ in
\eqref{eq:app_sampling_identities}, then using the coordinate-noise bound
and~\eqref{eq:app_coordinate_remainder}, yields
\begin{align}
\norm{m_{i,k}-\nabla f_i(x_{i,k})}
&\le C\sqrt p\,\delta_k,\label{eq:app_raw_bias}\\
\EE[\norm{g^e_{i,k}-m_{i,k}}^2\mid\mathcal F_k]
&\le C\frac{p}{n_c}
\left(\norm{\nabla f_i(x_{i,k})}^2+p\zeta^2\right)\notag\\
&\quad+C\frac{p^2}{n_c}\delta_k^2.
\label{eq:app_raw_variance}
\end{align}
The second line separates coordinate-subsampling variation from the
sample-function variation.  The one-sided estimator follows by replacing
the symmetric remainder with its one-sided counterpart and changing only
the numerical constants.

These estimates use the same $\xi_{i,k}$ at the two perturbed points.
If independent additive noises $w^+$ and $w^-$, each with variance
$\nu_i^2$, are used instead, the symmetric coordinate difference has
noise variance $\nu_i^2/(2\delta_{i,k}^2)$ per selected coordinate.
After multiplication by $p/n_c$, its contribution to the estimator
second moment is
\begin{equation}
\frac{p^2\nu_i^2}{2n_c\delta_{i,k}^2}.
\label{eq:app_independent_noise}
\end{equation}
The one-sided estimator has the same $1/\delta_{i,k}^2$ dependence with
a different numerical constant. This proves the distinction stated in
Remark~\ref{rem:indnoise}.

We next derive all moment bounds from the raw oracle. Smoothness and
heterogeneity imply
\begin{align}
\frac1n\sum_i\norm{\nabla f_i(x_{i,k})}^2
&\le C\norm{\nabla f(\bar x_k)}^2+C\chi_k+C\varsigma^2.
\label{eq:app_local_gradients}
\end{align}
Combining~\eqref{eq:app_raw_bias}--\eqref{eq:app_raw_variance} with
\eqref{eq:app_local_gradients}, and using $p/n_c\le\kappa_c$, yields
\begin{equation}
 \frac1n\sum_i\EE_k\norm{g^e_{i,k}}^2
 \le C\big(\norm{\nabla f(\bar x_k)}^2+\chi_k+p+p^2\delta_k^2\big).
\label{eq:app_raw_local}
\end{equation}
The constant absorbs $\zeta$, $\varsigma$, and $\kappa_c$, but is
independent of $n,p,T$.

Because $n^{-1}\sum_i\nabla f_i(\bar x_k)=\nabla f(\bar x_k)$,
\begin{align}
\left\|\bar m_k-\nabla f(\bar x_k)\right\|
&\le \frac1n\sum_i\norm{m_{i,k}-\nabla f_i(x_{i,k})}\notag\\
&\quad+\frac1n\sum_i
\norm{\nabla f_i(x_{i,k})-\nabla f_i(\bar x_k)}\notag\\
&\le C\sqrt p\,\delta_k+L_f\sqrt{\chi_k},
\label{eq:app_average_bias}
\end{align}
where $\bar m_k=n^{-1}\sum_i m_{i,k}$. Let
$z_{i,k}=g^e_{i,k}-m_{i,k}$. The $z_{i,k}$ are conditionally centered
and independent, hence
\begin{align}
\EE_k\norm{\bar g^e_k}^2
&=\norm{\bar m_k}^2+
\frac1{n^2}\sum_i\EE_k\norm{z_{i,k}}^2\notag\\
&\le C\norm{\nabla f(\bar x_k)}^2+C\chi_k
+C\frac pn+Cp^2\delta_k^2.
\label{eq:app_raw_average}
\end{align}

For $a\in\RR$ and $\gamma\in[1/2,1]$,
\begin{align}
|\mathcal P_{\gamma,\tau}(a)-a|^2
&\le2\tau^{2(1-\gamma)}|a|^{2\gamma}+2a^2\notag\\
&\le C_{\gamma,\tau}(1+a^2).
\label{eq:app_scalar_residual}
\end{align}
Consequently,
\[
 \norm{r_{i,k}}^2\le C_{\gamma,\tau}
 \big(p+\norm{g^e_{i,k}}^2\big).
\]
Together with~\eqref{eq:app_raw_local}, this proves
\eqref{eq:residual_moment}. Jensen's inequality gives
\begin{align}
\EE_k\norm{\bar r_k}^2
&\le\frac1n\sum_i\EE_k\norm{r_{i,k}}^2\notag\\
&\le C\big(\norm{\nabla f(\bar x_k)}^2+\chi_k+p+p^2\delta_k^2\big).
\label{eq:app_residual_average}
\end{align}
Since $\bar s_k=\bar g^e_k+\beta_k\bar r_k$,
\eqref{eq:app_raw_average} and~\eqref{eq:app_residual_average} prove
\eqref{eq:moment}. Finally,
\begin{align}
\EE_k\norm{\bsK\bm s_k}^2
&\le\sum_i\EE_k\norm{s_{i,k}}^2\notag\\
&\le2\sum_i\EE_k\norm{g^e_{i,k}}^2
+2\beta_k^2\sum_i\EE_k\norm{r_{i,k}}^2,
\end{align}
which proves~\eqref{eq:centered_moment}. This is the separate network
forcing estimate required by the disagreement recursion.

\section{Appendix B: Detailed Lyapunov Proof}

Let $A_\alpha=I_{np}-\alpha\bsL$. On the disagreement subspace,
\[
\norm{\bsK A_\alpha z}\le\rho_\alpha\norm{\bsK z},
\qquad
\rho_\alpha=\max_{2\le i\le n}|1-\alpha\lambda_i(L)|<1.
\]
Choose $\epsilon_\alpha>0$ so that
$(1+\epsilon_\alpha)\rho_\alpha^2=1-\ell_\alpha$ for some
$\ell_\alpha\in(0,1)$. Applying
$\norm{a+b}^2\le(1+\epsilon_\alpha)\norm a^2+
(1+\epsilon_\alpha^{-1})\norm b^2$ to
\eqref{eq:disagreement_recursion}, dividing by $n$, and using
\eqref{eq:centered_moment} gives
\begin{align}
\EE[\chi_{k+1}\mid\mathcal F_k]
&\le(1-\ell_\alpha)\chi_k\notag\\
&\quad+C_\alpha\eta_k^2\big(
V_g\norm{\nabla f(\bar x_k)}^2+V_x\chi_k\notag\\
&\hspace{31mm}+V_0p+V_\delta p^2\delta_k^2\big).
\label{eq:app_consensus}
\end{align}
Reducing the fixed upper bound on $\eta_k$ if necessary absorbs the
$V_x\eta_k^2\chi_k$ term into $\ell_\alpha\chi_k/4$.

The average recursion and $L_f$-smoothness give
\begin{align}
n\EE[f(\bar x_{k+1})-f^*\mid\mathcal F_k]
&\le n(f(\bar x_k)-f^*)\notag\\
&\quad-n\eta_k\langle\nabla f(\bar x_k),
\EE_k[\bar s_k]\rangle\notag\\
&\quad+\frac{nL_f\eta_k^2}{2}
\EE[\norm{\bar s_k}^2\mid\mathcal F_k].
\label{eq:app_average_descent}
\end{align}
We now establish the descent term without assuming that the pure
powerball average is aligned.  Let
$g_k=\nabla f(\bar x_k)$ and $d_k=\bar m_k-g_k$.
Equation~\eqref{eq:app_average_bias} and Young's inequality imply
\begin{equation}
 \langle g_k,\bar m_k\rangle
 \ge \frac34\norm{g_k}^2-C\chi_k-Cp\delta_k^2.
 \label{eq:app_raw_alignment}
\end{equation}
For the nonlinear residual, no sign is used:
\begin{align}
 \beta_k\big|\langle g_k,\EE_k[\bar r_k]\rangle\big|
 &\le\frac18\norm{g_k}^2
 +2\beta_k^2\EE_k\norm{\bar r_k}^2.            \label{eq:app_residual_young}
\end{align}
Substituting~\eqref{eq:app_residual_average} into
\eqref{eq:app_residual_young} and using the coefficient $R_g$ from
\eqref{eq:residual_moment} show that the additional gradient coefficient
is at most $2R_g\beta_k^2\le1/8$ under~\eqref{eq:beta_schedule}. Since
$\EE_k[\bar s_k]=\bar m_k+\beta_k\EE_k[\bar r_k]$, we obtain
\begin{equation}
 \langle g_k,\EE_k[\bar s_k]\rangle
 \ge\frac12\norm{g_k}^2-C\chi_k-Cp^2\delta_k^2-Cp\beta_k^2.
 \label{eq:app_anchored_alignment}
\end{equation}
The deliberately looser $p^2\delta_k^2$ term covers both estimators
with one expression.

Substituting~\eqref{eq:app_anchored_alignment} and~\eqref{eq:moment}
into~\eqref{eq:app_average_descent}, and choosing $\eta_k$ small enough
to absorb the $B_g\eta_k^2$ term, yields
\begin{align}
n\EE[f(\bar x_{k+1})-f^*\mid\mathcal F_k]
&\le n(f(\bar x_k)-f^*)\notag\\
&\quad-\frac14n\eta_k
\norm{\nabla f(\bar x_k)}^2\notag\\
&\quad+C_an\eta_k\chi_k+C_0p\eta_k^2\notag\\
&\quad+C_\delta np^2\eta_k\delta_k^2.
\label{eq:app_descent}
\end{align}
Here $n\eta_kp\beta_k^2\le\kappa_\beta^2 p\eta_k^2$; the terms with
$\eta_k^2\chi_k$ and $\eta_k^2p^2\delta_k^2$ have been enlarged into
the displayed terms using $\eta_k\le1$.

Define
\[
W_k=n(f(\bar x_k)-f^*)+\lambda\chi_k
\]
with any fixed $\lambda>0$. Multiply~\eqref{eq:app_consensus} by
$\lambda$ and add~\eqref{eq:app_descent}. If
\[
n\eta_k\le
\bar\eta\le\frac{\lambda\ell_\alpha}{4C_a},
\]
the positive $C_an\eta_k\chi_k$ term is absorbed by the consensus
contraction. A further fixed reduction of $\bar\eta$ absorbs the
$\eta_k^2\norm{\nabla f(\bar x_k)}^2$ term. Taking total expectations
then gives~\eqref{eq:lyap}, with
$c_x=\lambda\ell_\alpha/2$ and
$c_g>0$ numerical. This also shows explicitly why the stepsize
restriction is on $n\eta_k$.  The admissible $\bar\eta$ deteriorates as
the residual constant $C_{\gamma,\tau}$ grows or as
$1-\rho_\alpha$ approaches zero.

\section{Appendix C: Nonconvex Rate}

For a constant $\eta$, summing~\eqref{eq:lyap} from $k=0$ to $T-1$
gives
\begin{align}
c_x\sum_{k<T}\EE[\chi_k]
&+c_gn\eta\sum_{k<T}\EE[\norm{\nabla f(\bar x_k)}^2]\notag\\
&\le W_0+C_0pT\eta^2\notag\\
&\quad+C_\delta np^2\eta\sum_{k<T}\delta_k^2.
\label{eq:app_sum}
\end{align}
Consequently,
\begin{align}
\frac1T\sum_{k<T}\EE[\norm{\nabla f(\bar x_k)}^2]
&\le
\frac{W_0}{c_gnT\eta}+
\frac{C_0p\eta}{c_gn}+
\frac{C_\delta p^2}{c_gT}\sum_{k<T}\delta_k^2,
\label{eq:app_stationarity_finite}\\
\frac1T\sum_{k<T}\EE[\chi_k]
&\le
\frac{W_0}{c_xT}+
\frac{C_0p\eta^2}{c_x}+
\frac{C_\delta np^2\eta}{c_xT}\sum_{k<T}\delta_k^2.
\label{eq:app_consensus_finite}
\end{align}
For
\[
\eta=\sqrt{\frac{n}{pT}},\qquad
\delta_{i,k}\le
\frac{\kappa_\delta}
{p^{3/4}n^{1/4}(k+1)^{1/4}},
\]
the elementary estimate
\[
\sum_{k=0}^{T-1}\delta_k^2
\le\frac{2\kappa_\delta^2\sqrt T}{p^{3/2}\sqrt n}
\]
holds. Substitution into~\eqref{eq:app_stationarity_finite} and
\eqref{eq:app_consensus_finite}, together with $W_0\le C_Wn$, yields
\begin{align*}
\frac1T\sum_{k<T}\EE[\norm{\nabla f(\bar x_k)}^2]
&\le C\sqrt{\frac{p}{nT}},\\
\frac1T\sum_{k<T}\EE[\chi_k]&\le C\frac nT.
\end{align*}
Finally, $T\ge n^3/(p\bar\eta^2)$ is equivalent to
$n\eta\le\bar\eta$, the restriction used in Appendix~B.

\section{Appendix D: PL Rate}

Under Assumption~\ref{ass:pl},
\[
\norm{\nabla f(\bar x_k)}^2
\ge2\nu(f(\bar x_k)-f^*).
\]
The consensus decrease in~\eqref{eq:lyap} dominates
$c\eta_k\chi_k$ for all sufficiently small $\eta_k$. Hence there is a
$c_V>0$, depending on $\nu$ and the graph constants, such
that
\begin{align}
\EE[W_{k+1}]
&\le(1-c_V\eta_k)\EE[W_k]+C_0p\eta_k^2\notag\\
&\quad+C_\delta np^2\eta_k\delta_k^2.
\label{eq:app_pl_recursion}
\end{align}
The choice
$\delta_{i,k}^2\le\kappa_\delta^2\eta_k/(np)$ makes the last term at
most $C p\eta_k^2$. Set
$\eta_k=\kappa_\eta/(k+t_0)$,
$a=c_V\kappa_\eta>1$, and take $t_0$ large enough that
$n\eta_k\le\bar\eta$ and $a/t_0<1$.  With
$u_k=\EE[W_k]$,~\eqref{eq:app_pl_recursion} becomes
\[
 u_{k+1}\le\left(1-\frac{c_V\kappa_\eta}{k+t_0}\right)u_k+
 \frac{Cp\kappa_\eta^2}{(k+t_0)^2}
 .
\]
For $0\le j<T$, the elementary bound $1-z\le e^{-z}$ gives
\begin{align}
 \prod_{k=0}^{T-1}\left(1-\frac{a}{k+t_0}\right)
 &\le\left(\frac{t_0}{T+t_0}\right)^a,\label{eq:app_product_initial}\\
 \prod_{k=j+1}^{T-1}\left(1-\frac{a}{k+t_0}\right)
 &\le C_a\left(\frac{j+t_0}{T+t_0}\right)^a.
 \label{eq:app_product_forcing}
\end{align}
Unrolling the recursion and using
$\sum_{j=0}^{T-1}(j+t_0)^{a-2}\le
C_a(T+t_0)^{a-1}$ therefore yields
\begin{equation}
 \EE[W_T]\le
 C\left(\frac{t_0}{T+t_0}\right)^aW_0
 +\frac{Cp}{T+t_0}.
 \label{eq:app_pl_W}
\end{equation}
Since $W_T\ge n(f(\bar x_T)-f^*)$, division by $n$ proves
\eqref{eq:pl_finite}, including its initialization transient.

For the pointwise disagreement,~\eqref{eq:app_consensus} and the
stepsize restriction imply
\begin{align}
\EE[\chi_{k+1}]
&\le(1-\ell_\alpha/2)\EE[\chi_k]\notag\\
&\quad+C\eta_k^2\big(\EE[\norm{\nabla f(\bar x_k)}^2]\notag\\
&\hspace{31mm}+p+p^2\delta_k^2\big).
\end{align}
For an $L_f$-smooth function bounded below by $f^*$,
$\norm{\nabla f(x)}^2\le2L_f(f(x)-f^*)$.  Combining this inequality,
the radius schedule, and~\eqref{eq:app_pl_W} reduces the preceding
recursion to
\begin{align}
 \EE[\chi_{k+1}]&\le q\EE[\chi_k]
 +\frac{Cp}{(k+t_0)^2}\notag\\
 &\quad+\frac{CW_0t_0^a}{n(k+t_0)^{a+2}},
 \qquad q=1-\ell_\alpha/2.                 \label{eq:app_chi_forcing}
\end{align}
For every $b>0$ and $q\in(0,1)$, splitting the convolution at
$j=\lfloor T/2\rfloor$ gives
\[
 \sum_{j=0}^{T-1}\frac{q^{T-1-j}}{(j+t_0)^b}
 \le\frac{C_{b,q,t_0}}{(T+t_0)^b}.
\]
Unrolling~\eqref{eq:app_chi_forcing} with $b=2$ and $b=a+2$ proves
\eqref{eq:pl_consensus_finite}.

\section{Appendix E: Reproducibility Details}

\begin{figure*}[t]
\centering
\includegraphics[width=\textwidth]{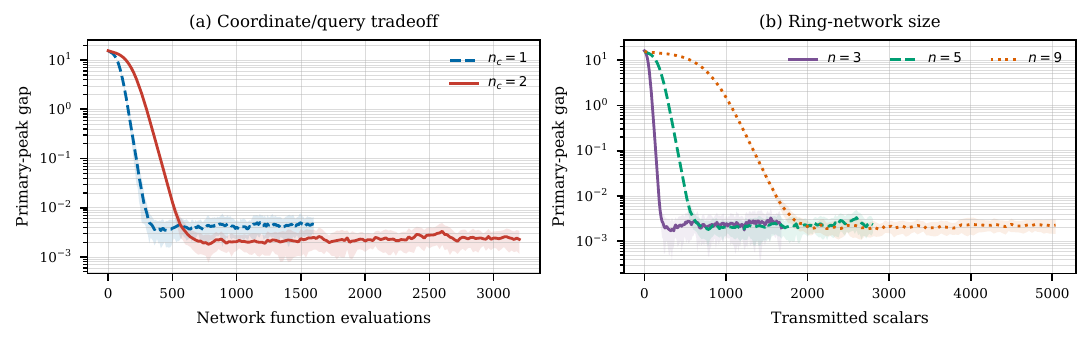}
\caption{Additional UAV scaling tests for $\gamma=0.7$, $\tau=0.05$,
and $\beta_k=\sqrt{\eta_k/n}$. (a) Number of sampled coordinates,
normalized by total network function evaluations. (b) Number of agents
on a ring, normalized by transmitted scalars. Shading is mean $\pm$
s.d. over 30 independent measurement-noise realizations.}
\label{fig:uav_scaling}
\end{figure*}

For classification, covariates are sampled from $\mathcal N(0,I_{100})$.
With $x_\star=\one_{100}/10$, labels are
$\bm{1}\{a^\top x_\star+\epsilon\ge0\}$ for
$\epsilon\sim\mathcal N(0,0.1^2)$. The data set and graph are fixed
across runs; ten oracle seeds independently select local samples,
coordinate subsets, and random directions.  We use batch size one,
$n=10$, $n_c=10$,
$T=10^4$,
\[
 \eta_k=\frac{0.08}{(k+25)^{0.05}},\qquad
 \delta_k=\frac{0.08}{(k+1)^{0.20}},\qquad \alpha=0.035.
\]
Anchored ZOOM-PB uses $\gamma=0.7$, $\tau=5$, and
$\beta_k=\min\{0.65,8\sqrt{\eta_k/n}\}$. ZODIAC uses
$(\eta,\alpha,\beta)=(0.08,0.25,0.1875)$, and ZODIAC-PB uses
$(0.05,0.4,0.4)$. ZOD-PDA uses
$(\eta,\alpha,\beta)=(0.15,0.15,0.10)$. ZOD-PA uses consensus gain
$0.02$ and stepsize scale $0.18$; ZO-GDA uses stepsize scale $0.15$;
ZONE-M uses $\rho=1.5$. These values were selected using pilot oracle
seeds $100$ and $101$; reported results use disjoint seeds $0,\ldots,9$.

Every method makes 20 function evaluations and broadcasts one $p$-vector
per round. Coordinate methods use ten two-sided coordinate differences.
ZOD-PDA and ZOD-PA average ten one-sided spherical differences, ZO-GDA
averages ten symmetric spherical differences, and ZONE-M uses ten Gaussian
differences. The data, graph, sample streams, and method-specific probe
streams are matched seed by seed. Figure~\ref{fig:classification} reports
means and standard deviations.

For UAV source seeking, the field parameters are
\[
(A_1,A_2,A_3)=(17,7,5),\quad
(\sigma_1,\sigma_2,\sigma_3)=(1.6,1.1,1.0),
\]
with centers $(5,5)$, $(2,8)$, and $(8,2)$. The five initial query
positions are $(1,1)$, $(9,1)$, $(1,9)$, $(9,9)$, and $(2,4)$.
The path, noise, and topology panels use
$F_i(x,\omega)=-[\mathcal H(x)+\omega]/40$,
$n_c=p=2$, $\alpha=0.055$, $\tau=0.05$, and
$\beta_k=\sqrt{\eta_k/n}$, with
\[
\eta_k=\frac{1.10}{(k+1)^{0.12}},\qquad
\delta_k=\frac{0.14}{(k+1)^{0.20}}.
\]
The path plot uses a ring, $\gamma=0.7$, zero measurement noise, and
55 rounds. The noise and topology panels use 100 rounds. The noise sweep uses
levels $0$, $0.02$, $0.05$, $0.10$, $0.20$, and $0.40$; the topology sweep
uses $\gamma=0.7$ and noise standard deviation $0.05$. Means and
standard deviations use random seeds $0$ through $29$. Independent
Gaussian noises are drawn separately at the positive and negative
coordinate probes. Consequently, these noisy runs test robustness and
are not instances of the common-random-number oracle assumed in the
rate theorems.

The weak-signal panel uses a ring, no measurement noise, $T=300$, and
$F_i(x)=-\mathcal H(x)/s$ for
$s\in\{40,80,160,320,640,1280,2560\}$. Every round costs four function values per
agent: ZOOM-PB and ZOOM evaluate both coordinates symmetrically, whereas
ZOD-PA and ZOD-PDA use three one-sided spherical probes sharing one center
value. Each method therefore uses four calls and sends
one two-dimensional primal vector; only ZOD-PDA stores a local dual vector.
The stepsize schedule is $c_\eta/(k+1)^{0.12}$. ZOOM-PB uses
$(c_\eta,\alpha,\gamma)=(8,0.055,0.7)$, ZOOM uses $(12,0.055,1)$,
ZOD-PA uses $(12,0.12)$, and ZOD-PDA uses
$(c_\eta,\alpha,\beta)=(8,0.05,0.05)$. One tuning per method was selected by
minimizing the worst pilot query count over $s\in\{40,160,640\}$ using
direction seeds 100 and 101. The candidate $c_\eta$ values were
$\{2,3,4,6,8,12,16,20\}$ for the coordinate methods,
$\{4,6,8,12,16\}$ for ZOD-PA, and $\{2,4,6,8\}$ for ZOD-PDA; their consensus
gains were searched over $\{0.055,0.08,0.12,0.16,0.20\}$ for ZOD-PA and
$(\alpha,\beta)\in\{0.05,0.10,0.20\}^2$ for ZOD-PDA.
The displayed ZOD-PA/ZOD-PDA means and standard
deviations use disjoint direction seeds $0,\ldots,29$. The target is the
unscaled physical gap $10^{-2}$, so changing $s$ weakens only the observed
black-box signal, not the success criterion. The reported $s\in\{1280,2560\}$
points were not included in parameter selection.

The communication axis counts both directions on every undirected
edge. With a two-dimensional primal state, round $k$ corresponds to
$2|\mathcal E|pk$ transmitted scalars. The performance metric is the
primary-peak gap
$\mathcal H(c_1)-n^{-1}\sum_i\mathcal H(x_{i,k})$; the plotted paths
are the physical query waypoints $x_{i,k}$.

Figure~\ref{fig:uav_scaling} reports the two remaining scaling checks.
For both panels, agents start evenly on a radius-$4.3$ circle centered
at $(5,5)$, and the curves show 30-run means and standard deviations at
measurement-noise standard deviation $0.05$. In panel (a), $p=2$ and
$n=5$: sampling one coordinate reaches the source region with fewer
network function evaluations, whereas sampling both coordinates gives
a lower noise floor. In panel (b), $n_c=2$ and $n\in\{3,5,9\}$ on a
ring. Larger rings need more transmitted scalars and have a smaller
spectral gap, so adding agents does not automatically shorten the
communication-normalized transient. This agrees with the explicit
graph dependence of the constants and with the fact that every added
agent also introduces another local objective.

\FloatBarrier

\bibliographystyle{IEEEtran}
\bibliography{bibliography}

\end{document}